\documentclass[12pt,a4paper]{article}
\usepackage[T1]{fontenc}
\usepackage{lmodern}
\usepackage[margin=2.5cm]{geometry}
\usepackage[onehalfspacing]{setspace}
\usepackage{amsmath,amssymb,amsthm}
\usepackage{graphicx}
\usepackage{booktabs}
\usepackage{xcolor}
\usepackage{csquotes}
\usepackage[authordate,backend=biber]{biblatex-chicago}
\usepackage{xurl}
\AtBeginBibliography{\Urlmuskip=0mu plus 0.5mu}
\usepackage[hidelinks]{hyperref}
\input{glyphtounicode}
\DefineBibliographyStrings{english}{andothers={et\addnbspace al\adddot}}
\defbibheading{bibliography}[\refname]{\section*{#1}}

\theoremstyle{plain}
\newtheorem{lemma}{Lemma}
\newtheorem{proposition}{Proposition}
\newtheorem{corollary}{Corollary}
\theoremstyle{definition}
\newtheorem{assumption}{Assumption}
\newtheorem{definition}{Definition}
\newtheorem{example}{Example}
\newtheorem{remark}{Remark}
\newcommand{\E}{\mathbb{E}}
\DeclareMathOperator{\vech}{vech}
\DeclareMathOperator{\Var}{Var}
\DeclareMathOperator{\Cov}{Cov}
\DeclareMathOperator{\diag}{diag}
\DeclareMathOperator{\rank}{rank}
\newcommand{\approvedstart}[1]{}
\newcommand{\approvedend}{}
\newcommand{\resumehere}[1]{}

\title{\Large Identification through Heteroskedasticity in One Lemma:\\[0.5ex] \large Partial Identification, an Eigenvalue Construction, and a Caution on Rank Tests}
\author{George Milunovich\thanks{Macquarie University, Sydney, NSW 2109, Australia. E-mail: george.milunovich@mq.edu.au.}}
\date{30 October 2026}

\begin{document}

\maketitle

\begin{abstract}
\noindent This teaching note unifies known results on identification through heteroskedasticity, under volatility regimes, GARCH, stochastic volatility and covariate-driven variances, in one lemma with a short linear-algebra proof. A shock is identified up to sign and labelling if and only if no other shock shares its variance path, and shocks that share a path are identified up to a rotation. The rows of the structural coefficient matrix solve one generalised eigenvalue problem that nests Rigobon's (2003) two-regime solution. Rank tests on second moments of the errors count at most linearly independent centred variance paths, not identified shocks. A simulation illustrates each point.

\medskip
\noindent\textbf{Keywords:} identification through heteroskedasticity, partial identification, structural vector autoregression, simultaneous equations, rank tests, teaching econometrics

\noindent\textbf{JEL classification:} C32, C30, A23
\end{abstract}

\thispagestyle{empty}
\newpage
\setcounter{page}{1}

\section{Introduction}\label{sec:1}

\approvedstart{23 October 2026}%
In simultaneous equation systems and structural vector autoregressions (SVARs), changes in the variances of the structural shocks can supply the missing identifying information when credible exclusion restrictions are unavailable and sign restrictions identify only a set of models. The idea goes back to \textcite{Wright:28}, and in modern form to \textcite{sentana2001identification} and \textcite{Rigobon:03}. It has since been applied with two or more volatility regimes \parencite{Lanne/lue:06,LLM2010}, with generalised autoregressive conditional heteroskedasticity (GARCH) \parencite{Lanne/Saikk:05,MilunovichYang2013,Luetkepohl/Milunovich:14}, with stochastic volatility \parencite{LSUW2026}, with nonparametric latent volatility \parencite{Lewis2021}, and with variances driven by observed covariates \parencite{Lewbel:06,milunovich2018simultaneous}. \textcite{Lewis2025} provides a survey of the literature.\approvedend

\approvedstart{23 October 2026}%
In an econometrics course, these cases can be presented separately, each with its own identification argument. Examples are an eigen-decomposition for two regimes and a rank condition on the variance paths for conditionally heteroskedastic factor models. Students may then find it hard to see how these different cases are connected. Both arguments, as usually stated, assume that every shock is identified. The eigen-decomposition requires distinct eigenvalues, and the rank condition requires linearly independent variance paths. When the heteroskedasticity is incomplete, because only some shocks are heteroskedastic or because some shocks' variances change in proportion, the answer to which parameters remain identified comes from separate results under different conditions. This partial case is important in practice. For example, \textcite{BBKM2026} find that a volatility break in October 1987 identifies only one of the three shocks of a standard oil-market model, because a test does not reject that the other two shocks' variances change in the same proportion. \textcite{sentana2001identification} show that such shocks remain identified when the other shocks share a variance path, and \textcite{LMY2020} when the other shocks are homoskedastic.\approvedend

\approvedstart{23 October 2026}%
We make two contributions to teaching this material. First, we collect these related cases in one lemma. Its proof uses nothing beyond the facts that an orthogonal matrix has orthonormal rows and that diagonal matrices can be compared element by element. The lemma states exactly what is identified, so its condition is both necessary and sufficient. Second, we derive two consequences that help in teaching, a constructive version based on one weighted moment and a caution about what rank tests can detect, both of which are illustrated by simulation. We make no claim of novelty for the lemma. Its argument is essentially that of Proposition~1 in \textcite{LLM2010}, for regimes, and of Theorem~A.1, with Lemma~A.1 and Corollary~A.1, in \textcite{LSUW2026}, for general variance sequences. The lemma collects in one statement the sufficiency direction, its converse, and the rotation within a class that shares a path \parencite[Proposition~4]{sentana2001identification}, under a normalisation that treats all cases alike.\approvedend

\approvedstart{29 October 2026}%
We summarise the results as follows. A column of the structural impact matrix $B$, which maps the structural shocks into the reduced-form errors, is identified up to sign and labelling if and only if no other shock's variance is proportional to that of the shock that multiplies it. Shocks whose variances are proportional are identified only as a class. The condition allows the variance paths to be linearly dependent, as they always are with two volatility regimes and two or more heteroskedastic shocks, so linear independence, which many published identification conditions require, is sufficient but not necessary. The rows of the structural coefficient matrix $A$, the inverse of $B$, solve a generalised eigenvalue problem built from a weighted second moment, which nests the two-regime solution of \textcite{Rigobon:03}. Rank tests of the type of \textcite{CraggDonald1997} and \textcite{KleibergenPaap2006}, applied to the covariance between the squares and cross-products of the errors and functions of observed covariates, count at most the number of linearly independent centred paths, which can be smaller or larger than the number of identified shocks. With two regimes such tests find a rank of one even when every shock is identified.\approvedend

\approvedstart{23 October 2026}%
We set out the model with a motivating example in Section~\ref{sec:2}, and state and prove the lemma and its corollaries in Section~\ref{sec:3}. The standard cases follow in Section~\ref{sec:5}. The eigenvalue construction and the caution on rank tests are in Section~\ref{sec:6}, the simulation exercise is in Section~\ref{sec:8}, and Section~\ref{sec:9} concludes.\approvedend

\section{Setting}\label{sec:2}

\approvedstart{23 October 2026}%
Consider a market in which demand is ${q=-\beta p+\varepsilon_d}$ and supply is ${q=\alpha p+\varepsilon_s}$, with uncorrelated shocks. Solving the two equations gives the reduced form
\begin{equation}\label{eq:reduced}
p=\frac{\varepsilon_d-\varepsilon_s}{\alpha+\beta}\quad\text{and}\quad q=\frac{\alpha\varepsilon_d+\beta\varepsilon_s}{\alpha+\beta},
\end{equation}
which expresses price and quantity as linear combinations of the two shocks. The data reveal only three moments, $\Var(p)$, $\Var(q)$ and $\Cov(p,q)$. These must determine four unknowns, the slopes $\alpha$ and $\beta$ and the variances $\Var(\varepsilon_d)$ and $\Var(\varepsilon_s)$, so many pairs of supply and demand curves fit the data equally well. Now suppose that the shock variances differ between a calm and a volatile regime, and that the regime is observed. Within each regime $s=1,2$ the data reveal $\Var_s(p)$, $\Var_s(q)$ and $\Cov_s(p,q)$, so the two regimes give six moments. The unknowns are the two slopes, which do not change, and the two shock variances in each regime, also six in all, so the two regimes can identify the system \parencite{Wright:28,Rigobon:03}. This count is the order condition, which is necessary but not sufficient for identification. In particular, the six moments identify the system if and only if the two shocks' variances change in different proportions between the regimes. This is the rank condition in this example. For example, suppose that only the variance of the demand shock is larger in the volatile regime. The demand curve then shifts more while supply stays put, and the scatter of $(p,q)$ stretches along the supply curve and traces it out. Section~\ref{sec:3} shows that, in general, the rank condition is that no two shocks share a variance path.\approvedend

\begin{figure}[!ht]
\centering
\includegraphics[width=\linewidth]{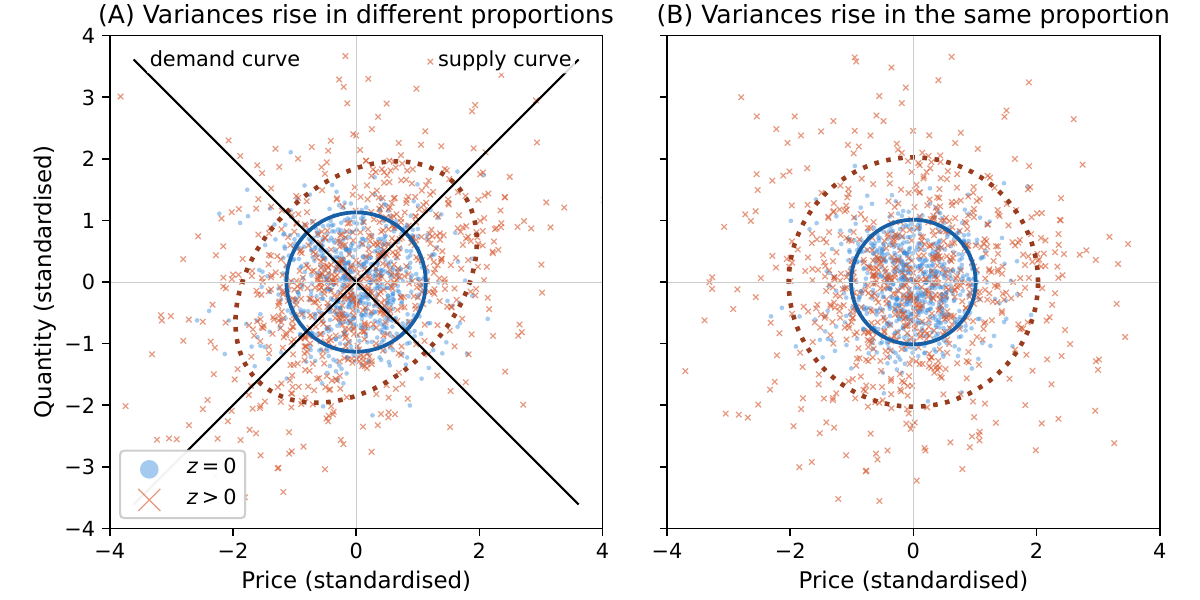}
\caption{Simulated price and quantity in the demand and supply example.}\label{fig:ellipses}

\medskip
\parbox{\linewidth}{\footnotesize Notes: Demand is $q=-p+\varepsilon_d$ and supply is $q=p+\varepsilon_s$, and $z$ is an observed variable that drives the variances of the shocks. Panel A: $\Var(\varepsilon_d\mid z)=1+3z$ and $\Var(\varepsilon_s\mid z)=1+z$. Panel B: $\Var(\varepsilon_d\mid z)=\Var(\varepsilon_s\mid z)=1+3z$. Each of the 1,500 observations has $z=0$ with probability one half (circles, solid ellipse) and otherwise $z$ drawn uniformly on $(0,2)$ (crosses, dotted ellipse). Price and quantity are in deviations from their means, divided by their standard deviations. In Panel A, straight lines are the demand and supply curves. Ellipses are contours of the covariance matrix of each group implied by the model.}
\end{figure}

\approvedstart{23 October 2026}%
Figure~\ref{fig:ellipses} shows the same effect with an observed driver $z$ in place of the regime indicator. When $z=0$ the two shocks have the same variance, and when $z>0$ the variances rise with $z$. In Panel A both variances rise, but the variance of the demand shock rises faster. The observations with $z=0$ then form a circle, while those with $z>0$ form a larger ellipse stretched along the supply curve. The demand curve is not traced out, but it is identified too. Along it, the spread of the observations grows less than along the supply curve. The circle and the ellipse have different shapes, and only the true pair of curves fits both, so demand and supply are both identified. In Panel B both variances rise in the same proportion. The observations with $z>0$ then form a larger circle, so the second group adds no information about the slopes, and the shocks are not identified. The rest of the note generalises these two observations.\approvedend

\approvedstart{23 October 2026}%
Let $u_i$ be a $K\times1$ vector of reduced-form errors, for example the residuals of a vector autoregression (VAR) or of the reduced form of a simultaneous equation system, such as the demand and supply model above, and suppose that
\begin{equation}\label{eq:model}
u_i=B\varepsilon_i,\quad\text{or equivalently}\quad Au_i=\varepsilon_i\quad\text{with}\quad A=B^{-1},
\end{equation}
with $\varepsilon_i$ the $K\times1$ vector of structural shocks, and $B$ the nonsingular $K\times K$ structural impact matrix, whose columns are $b_1,\dots,b_K$. The index $i$ runs over time periods or cross-sectional units. The conditioning information $W_i$ is the past in a time series model, a regime indicator, or observed covariates. In stochastic volatility models it also includes the unobserved volatility processes.\approvedend

\begin{assumption}\label{ass:cond}
\approvedstart{23 October 2026}%
The structural shocks satisfy $\E(\varepsilon_i\mid W_i)=0$ and $\Var(\varepsilon_i\mid W_i)=H_i=\diag(h_{1i},\dots,h_{Ki})$, with $h_{ki}>0$.
\end{assumption}

\begin{assumption}\label{ass:norm}
The conditional variances satisfy $\E(H_i)=I_K$, where the expectation is over the distribution of $W_i$ (or, for a deterministic index such as regimes, over the relative frequencies of the regimes).
\end{assumption}

Assumption~\ref{ass:cond} says that the structural shocks are conditionally uncorrelated. Assumption~\ref{ass:norm} fixes the scale of each shock. It is a normalisation, not a restriction. Without it, $(BD,\,D^{-1}H_iD^{-1})$ would fit the data as well as $(B,H_i)$ for any positive diagonal $D$. \textcite{LLM2010} and \textcite{LSUW2026} instead normalise the variances in the first regime or period. Normalising the mean treats regimes, time series and covariates alike. Under both assumptions, the conditional and unconditional covariance matrices of the errors are
\begin{equation}\label{eq:omega}
\Omega_i:=\Var(u_i\mid W_i)=BH_iB'\quad\text{and}\quad\Omega:=\E(u_iu_i')=BB'.
\end{equation}
In the time-series, regime and covariate cases below, $\Omega_i$ is written $\Omega_t$, $\Omega_s$ and $\Omega(z)$, respectively.\approvedend

\begin{definition}
\approvedstart{24 October 2026}%
Let $H$ denote the collection of the conditional variance matrices $H_i$ over all $i$. A pair $(\tilde B,\tilde H)$ satisfying Assumptions~\ref{ass:cond} and~\ref{ass:norm} is \emph{observationally equivalent} to $(B,H)$ if $\tilde B\tilde H_i\tilde B'=BH_iB'$ for all $i$ (almost surely).
\end{definition}

\begin{definition}
The \emph{variance path} of shock $k$ is the collection $h_k$ of its conditional variances $h_{ki}$ over all $i$. Shocks $j$ and $k$ \emph{share a path} if $h_{ji}=h_{ki}$ for all $i$ (almost surely). Sharing a path sorts the $K$ shocks into classes, so that each shock belongs to exactly one class, and $B_C$ denotes the columns of $B$ for the shocks in class $C$.
\end{definition}

For example, if shocks 2 and 3 share a path and shock 1 has a path of its own, then $\{1\}$ and $\{2,3\}$ are classes, and $B_{\{2,3\}}=(b_2,b_3)$. Because of Assumption~\ref{ass:norm}, all homoskedastic shocks have the path $h_k\equiv1$, so they form one class. Two heteroskedastic shocks whose variances are proportional also share a path, since the normalisation removes the factor of proportionality.\approvedend

\approvedstart{24 October 2026}%
The \emph{statistical step} of identification asks which conditional covariance matrices $\Omega_i$ of the reduced-form errors the data reveal. The answer depends on the model. With regimes, $\Omega_s$ is simply the covariance matrix of the errors within regime $s$. With stochastic volatility, $\Omega_t$ depends on unobserved volatilities, so whether the data reveal it is itself a question. Section~\ref{sec:5} discusses each case. The \emph{algebraic step}, the identification problem in the classical sense, asks which $B$ are consistent with given $\Omega_i$. Separating the two steps lets the lemma address only the algebraic step, which is common to all cases.\approvedend

\section{The lemma and its corollaries}\label{sec:3}

\begin{lemma}\label{lem:main}
\approvedstart{24 October 2026}%
Let Assumptions~\ref{ass:cond} and~\ref{ass:norm} hold, let $(\tilde B,\tilde H)$ be observationally equivalent to $(B,H)$, and let $\Gamma=\tilde B^{-1}B$, with elements $\gamma_{kj}$. Then:
\begin{enumerate}
\item[(a)] The matrix $\Gamma$ is orthogonal, and $\tilde B=B\Gamma'$.
\item[(b)] If $\gamma_{kj}\neq0$, then $\tilde h_k=h_j$.
\item[(c)] For every class $C$ with $m$ members, exactly $m$ shocks of $(\tilde B,\tilde H)$ have the variance path of $C$, and the corresponding columns of $\tilde B$ are $B_CQ_C$ for some $m\times m$ orthogonal matrix $Q_C$.
\end{enumerate}
Conversely, for any orthogonal matrices $Q_C$, one for each class, and any relabelling of the shocks, the resulting $\tilde B$, with the variance paths relabelled accordingly, is observationally equivalent to $(B,H)$.\approvedend
\end{lemma}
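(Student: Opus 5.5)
Starting from observational equivalence, $\tilde B\tilde H_i\tilde B'=BH_iB'$ for all $i$, I would premultiply by $\tilde B^{-1}$ and postmultiply by $\tilde B'^{-1}$ to obtain
\begin{equation*}
\tilde H_i=\Gamma H_i\Gamma'\quad\text{for all } i .
\end{equation*}
For part~(a), I would take expectations over $W_i$ and use Assumption~\ref{ass:norm} for both pairs, which gives $I_K=\E(\tilde H_i)=\Gamma\E(H_i)\Gamma'=\Gamma\Gamma'$. Hence $\Gamma$ is orthogonal, and $\tilde B=B\Gamma^{-1}=B\Gamma'$.

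For part~(b), I would use orthogonality to rewrite the display as $\tilde H_i\Gamma=\Gamma H_i$. Both $\tilde H_i$ and $H_i$ are diagonal, so the $(k,j)$ element reads $\tilde h_{ki}\gamma_{kj}=\gamma_{kj}h_{ji}$. This is the element-by-element comparison of diagonal matrices. If $\gamma_{kj}\neq0$, dividing gives $\tilde h_{ki}=h_{ji}$ for all $i$ almost surely, so $\tilde h_k=h_j$.

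Part~(c) is the main step. Fix a class $C$ with path $h_C$ and $m$ members. Define $\tilde C=\{k:\tilde h_k=h_C\}$. By (b), $\gamma_{kj}=0$ whenever $j\in C$ and $k\notin\tilde C$, since any nonzero entry would force $\tilde h_k=h_C$. So the columns of $\Gamma$ indexed by $C$ are supported on the rows in $\tilde C$.

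Conversely, take $k\in\tilde C$ and $j\notin C$. Then $h_j\neq h_C=\tilde h_k$, so $\gamma_{kj}=0$ by (b), and the rows of $\Gamma$ indexed by $\tilde C$ are supported on the columns in $C$. Thus, after permuting rows and columns, $\Gamma$ is block diagonal up to the blocks $\Gamma_{\tilde C C}$.

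The columns of $\Gamma$ in $C$ are orthonormal, so $\Gamma_{\tilde C C}$ has orthonormal columns, and $|\tilde C|\ge m$. Its rows are orthonormal rows of $\Gamma$, so $|\tilde C|\le m$. Hence $|\tilde C|=m$, and $Q:=\Gamma_{\tilde C C}$ is an $m\times m$ orthogonal matrix.

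Since $\tilde B=B\Gamma'$ and row $k\in\tilde C$ of $\Gamma$ vanishes outside $C$, the columns of $\tilde B$ indexed by $\tilde C$ equal $B_C\Gamma_{\tilde C C}'=B_CQ_C$, where $Q_C=Q'$ is orthogonal. The obstacle in this step is keeping track of the fact that $\tilde C$ may be a different set of indices from $C$, which is why I argue through the zero pattern of $\Gamma$ rather than assuming $\tilde C=C$.

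For the converse, I would set $\Gamma'=\operatorname{blockdiag}(Q_C)$ composed with a permutation matrix $P$, and take $\tilde B=B\Gamma'$. Let $\tilde H_i=\Gamma H_i\Gamma'$. Within each block $H_i$ is the scalar $h_{Ci}I_m$, so $Q_C'h_{Ci}I_mQ_C=h_{Ci}I_m$, and conjugating by $P$ merely relabels the diagonal entries. Hence $\tilde H_i$ is diagonal and positive, with $\E(\tilde H_i)=\Gamma\Gamma'=I_K$, so Assumptions~\ref{ass:cond} and~\ref{ass:norm} hold. Finally, $\tilde B\tilde H_i\tilde B'=B\Gamma'\Gamma H_i\Gamma'\Gamma B'=BH_iB'$, so $(\tilde B,\tilde H)$ is observationally equivalent to $(B,H)$.
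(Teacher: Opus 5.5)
Your proof is correct and follows the paper's argument step for step. You use expectations for (a), the entrywise comparison of $\tilde H_i\Gamma=\Gamma H_i$ for (b), the zero pattern of $\Gamma$ together with orthonormality for (c), and block-diagonal conjugation for the converse. The only difference is in (c): you obtain $|\tilde C|=m$ class by class, from the orthonormality of both the rows and the columns of $\Gamma_{\tilde C C}$. The paper instead uses only the rows, bounds the number of rows supported on each class by $m$, and closes the count with the fact that the class sizes sum to $K$.
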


\begin{proof}
\approvedstart{24 October 2026}%
(a) Observational equivalence gives $\tilde H_i=\Gamma H_i\Gamma'$ for all $i$. Taking expectations and using Assumption~\ref{ass:norm} for both structures gives $I_K=\Gamma\Gamma'$. Hence $\Gamma$ is orthogonal, and $\tilde B=B\Gamma^{-1}=B\Gamma'$.

(b) Multiplying $\tilde H_i=\Gamma H_i\Gamma'$ on the right by $\Gamma$ gives $\Gamma H_i=\tilde H_i\Gamma$. Both $H_i$ and $\tilde H_i$ are diagonal, so the $(k,j)$ element reads $\gamma_{kj}h_{ji}=\tilde h_{ki}\gamma_{kj}$ for all $i$. If $\gamma_{kj}\neq0$, then $h_{ji}=\tilde h_{ki}$ for all $i$ (almost surely).

(c) By (b), all nonzero elements of row $k$ of $\Gamma$ lie in columns whose shocks have the path $\tilde h_k$, so each row is supported on a single class (every row is nonzero, since $\Gamma$ is orthogonal). The rows supported on a class $C$ are orthonormal vectors in the $m$-dimensional coordinate space of $C$, so there are at most $m$ of them. Since there are $K$ rows in total and the class sizes add up to $K$, there are exactly $m$. Restricted to the columns in $C$, these rows form an $m\times m$ orthogonal matrix, $Q_C'$ say. By (a), the corresponding columns of $\tilde B$ are $B_CQ_C$, and the corresponding shocks have the variance path of $C$.

For the converse, let $\Gamma$ be block diagonal with blocks $Q_C'$, followed by a permutation. On the coordinates of a class, $H_i$ is a scalar multiple of the identity, $\phi_iI_m$ say, so $Q_C'(\phi_iI_m)Q_C=\phi_iI_m$. Hence $\Gamma H_i\Gamma'$ is diagonal, equal to $H_i$ up to the relabelling, and satisfies Assumption~\ref{ass:norm}.\approvedend
\end{proof}

\approvedstart{24 October 2026}%
In words, \emph{a column of $B$ is identified up to sign and labelling if and only if the variance path of the corresponding shock, the shock that multiplies that column, differs from those of all other shocks.} A class of $m\ge2$ shocks with a common path is identified only up to an $m\times m$ rotation of its columns. Their column space and their contribution $B_CB_C'$ to $\Omega$ are still identified. Distinctness of the paths is therefore necessary and sufficient. Necessity relies on $B$ being otherwise unrestricted and on using only the second moments of the errors. Restrictions can identify a rotation within a class that the variances leave free, as zero restrictions on $B$ do in \textcite{BBKM2026}, and as a unit diagonal on $A$ combined with variance functions without intercepts does locally, that is, uniquely in a neighbourhood of the true parameters, in \textcite[Example~1]{milunovich2018simultaneous}. In addition, higher moments can identify more. When the shocks are mutually independent and at most one of them is Gaussian, $B$ is identified up to sign and labelling even without heteroskedasticity \parencite{LanneMeitzSaikkonen2017}. The labels of the shocks are never identified by the second moments alone, and any labelling rule, such as ordering the shocks by the size of a variance parameter, can fix them.\approvedend

\begin{corollary}[Full identification]\label{cor:full}
\approvedstart{24 October 2026}%
If no two shocks share a variance path, then $B$ is identified up to the labelling and signs of its columns, and the variance paths are identified up to the same labelling. In particular, the condition allows at most one homoskedastic shock.\approvedend
\end{corollary}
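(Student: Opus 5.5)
The plan is to read Corollary~\ref{cor:full} directly off Lemma~\ref{lem:main} by specialising to the case in which every class is a singleton. If no two shocks share a variance path, then each class $C=\{k\}$ has $m=1$. Let $(\tilde B,\tilde H)$ be any observationally equivalent pair. Part~(c) of the lemma then says that exactly one shock of $(\tilde B,\tilde H)$ has the path $h_k$, and that its column of $\tilde B$ equals $b_kQ_{\{k\}}$, where $Q_{\{k\}}$ is a $1\times1$ orthogonal matrix, that is, $\pm1$.

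Since the classes partition $\{1,\dots,K\}$ and each contributes exactly one column of $\tilde B$, the assignment $k\mapsto$ (the index of the $\tilde B$ column with path $h_k$) is a bijection, i.e.\ a permutation $\pi$. Equivalently, by (a) and (b), $\Gamma$ is orthogonal and each row is supported on a single column. Such a matrix is a signed permutation matrix, so $\tilde B=B\Gamma'=BP D$ with $P$ a permutation and $D=\diag(\pm1)$. Part~(b) gives $\tilde h_{\pi(k)}=h_k$, so the variance paths coincide up to the same relabelling. The converse in Lemma~\ref{lem:main} shows that every such signed relabelling is indeed observationally equivalent, so ``up to labelling and signs'' cannot be sharpened.

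For the final sentence, I would note that by Assumption~\ref{ass:norm} any homoskedastic shock has $h_{ki}=\E(h_{ki})=1$ for all $i$. Two homoskedastic shocks therefore share the path $h\equiv1$, and this would violate the hypothesis. Hence at most one shock can be homoskedastic.

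I expect no real obstacle. The only point needing care is to make explicit that the singleton classes yield a bijective relabelling rather than merely a column-by-column statement. The counting in part~(c), exactly $m$ shocks per class, supplies this.
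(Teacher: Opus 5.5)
Your proposal is correct and follows the paper's proof. Both specialise Lemma~\ref{lem:main}(c) to singleton classes, so that each $Q_C=\pm1$, and both use Assumption~\ref{ass:norm} to place every homoskedastic shock on the common path $h\equiv1$. Your use of the ``exactly $m$'' count in (c) to obtain a bijective relabelling, and of the converse to show the result cannot be sharpened, makes explicit what the paper's proof leaves implicit.
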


\begin{proof}
\approvedstart{24 October 2026}%
If no two shocks share a path, every class has one member. By Lemma~\ref{lem:main}(c), each column of $\tilde B$ is then $b_kQ_C$, where $Q_C$ is a $1\times1$ orthogonal matrix. The only $1\times1$ orthogonal matrices are $+1$ and $-1$, so each column of $\tilde B$ is $\pm b_k$ for some $k$, and its variance path is that of shock $k$. Hence $B$ is identified up to the labelling and signs of its columns, and the paths up to the same labelling. At most one shock can be homoskedastic, because under Assumption~\ref{ass:norm} all homoskedastic shocks share the constant path.\approvedend
\end{proof}

\begin{corollary}[Partial identification]\label{cor:partial}
\approvedstart{24 October 2026}%
Suppose that shocks $1,\dots,r$ have variance paths that differ from one another and from the constant path, and that shocks $r+1,\dots,K$ are homoskedastic. Partition $B=[B_1,B_2]$ after the first $r$ columns, and let $A_1$ be the first $r$ rows of $A$. Then $B_1$ is identified up to replacement by $B_1PD$, where $P$ is a permutation matrix and $D$ a diagonal matrix of signs. The block $B_2$ is identified only up to replacement by $B_2Q$ with $Q$ orthogonal, and $A_1=B_1'\Omega^{-1}$ is identified up to the labelling and signs of its rows. If $r=K-1$, $B$ is identified up to labelling and sign.\approvedend
\end{corollary}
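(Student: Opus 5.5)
The plan is to read Corollary~\ref{cor:partial} off Lemma~\ref{lem:main}(c), taking the class structure from the hypothesis. Shocks $1,\dots,r$ have paths that differ from one another and from the constant path, so each forms a singleton class. Under Assumption~\ref{ass:norm}, shocks $r+1,\dots,K$ all have the path $h_k\equiv1$, so together they form one class $C_0$ with $K-r$ members.

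Let $(\tilde B,\tilde H)$ be observationally equivalent to $(B,H)$ and apply Lemma~\ref{lem:main}(c) class by class. For each singleton class $\{k\}$, exactly one shock of the alternative structure has path $h_k$, and its column is $\pm b_k$, exactly as in the proof of Corollary~\ref{cor:full}. Collecting the columns of $\tilde B$ whose paths are heteroskedastic therefore gives $B_1PD$, with $P$ recording the labelling and $D$ the signs. For $C_0$, exactly $K-r$ columns of $\tilde B$ have the constant path, and they equal $B_2Q$ for some $(K-r)\times(K-r)$ orthogonal $Q$. The converse part of Lemma~\ref{lem:main} shows that every such $P$, $D$ and $Q$ is attainable. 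Hence these are exactly the equivalence classes, so $B_2$ is identified only up to $B_2Q$. If $r=K-1$, then $C_0$ is a singleton and $Q=\pm1$, which gives the final claim. One could also simply cite Corollary~\ref{cor:full} here, since then no two shocks share a path.

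For $A_1$, I would use $\Omega=BB'$ from \eqref{eq:omega}, which gives $\Omega^{-1}=A'A$ and hence $B'\Omega^{-1}=B'A'A=(AB)'A=A$. Taking the first $r$ rows gives $A_1=B_1'\Omega^{-1}$. Since $\Omega$ is identified from the data and $B_1$ is identified up to $B_1PD$, the matrix $A_1$ is identified up to $(B_1PD)'\Omega^{-1}=DP'A_1$, that is, up to the labelling and signs of its rows.

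The only delicate point is bookkeeping. I need to check that the rotation $Q$ acting on $B_2$ leaves the first $r$ rows of $\tilde A=\tilde B^{-1}$ unaffected. The formula $\tilde A_1=\tilde B_1'\Omega^{-1}$ sidesteps any direct inversion, because it depends only on $\tilde B_1$ and on $\Omega$. So $A_1$ is pinned down even though $A_2$ is not.
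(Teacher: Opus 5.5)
Your proposal is correct and follows the paper's proof essentially step for step. The shocks $1,\dots,r$ form singleton classes and the homoskedastic shocks form one class, Lemma~\ref{lem:main}(c) gives $\pm b_k$ and $B_2Q$, and $A=B'\Omega^{-1}$ from $\Omega=BB'$ handles $A_1$; you are only slightly more explicit than the paper in invoking the lemma's converse for the ``only up to'' claim and in computing $\tilde A_1=DP'A_1$.
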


\begin{proof}
\approvedstart{24 October 2026}%
Each of shocks $1,\dots,r$ forms a class of its own, and shocks $r+1,\dots,K$ form one class. By Lemma~\ref{lem:main}(c), the columns of $\tilde B$ for the first $r$ classes are $\pm b_k$, which gives $B_1PD$, and the columns for the homoskedastic class are $B_2Q$ with $Q$ orthogonal. For $A_1$, note that $\Omega=BB'$ implies $A=B^{-1}=B'(BB')^{-1}=B'\Omega^{-1}$, so $A_1=B_1'\Omega^{-1}$, and $\Omega$ is identified. If $r=K-1$, the homoskedastic class has one member, and $Q=\pm1$.\approvedend
\end{proof}

\approvedstart{24 October 2026}%
Corollary~\ref{cor:full} is the familiar full-identification result \parencite{sentana2001identification,Rigobon:03,milunovich2018simultaneous}, and Corollary~\ref{cor:partial} is the homoskedastic case of Proposition~4 of \textcite{sentana2001identification}, which allows the remaining shocks to share any common path, and the subvector result of \textcite{LMY2020}. Both subvector results assume that the variance paths are linearly independent. For Sentana and Fiorentini this means the heteroskedastic paths together with the common path of the remaining shocks. Remark~\ref{rem:1} shows that this is more than the corollaries need.\approvedend

\approvedstart{24 October 2026}%
To see Corollary~\ref{cor:partial} in terms of counts, take $K=3$, one heteroskedastic shock and two regimes, with the heteroskedastic shock having variance $\lambda_{11}$ in regime~1 and $\lambda_{12}$ in regime~2. The two regime covariance matrices give twelve moments, the variances of the three errors and their three covariances in each regime. That seems enough for the ten unknowns, the nine elements of $B$ and one variance parameter of the heteroskedastic shock. Its two regime variances, $\lambda_{11}$ and $\lambda_{12}$, are tied together by Assumption~\ref{ass:norm}, so only one of them is free. However, $\Omega_2-\Omega_1=(\lambda_{12}-\lambda_{11})\,b_1b_1'$ has rank one, so it contains only three independent numbers. The two regimes therefore give nine independent moments instead of twelve, for ten unknowns. The missing unknown is the angle of a rotation between the two homoskedastic shocks, which the data cannot determine. In general, one such angle remains for each pair of homoskedastic shocks, $(K-r)(K-r-1)/2$ in all, so none remain when only one shock is homoskedastic.\approvedend

\begin{remark}\label{rem:1}
\approvedstart{24 October 2026}%
Many published identification conditions require the variance paths to be linearly independent, in the sense that $1$, $h_1,\dots,h_r$ are linearly independent. That condition is sufficient but not necessary. With two regimes, every variance path is a linear combination of a constant and the regime indicator, so linear independence fails whenever $r\ge2$. Yet the lemma shows that all shocks with distinct paths are identified. \textcite[Theorem~1, Corollary~1 and Example~1]{milunovich2018simultaneous} show that linear independence is not necessary for identification by quasi-maximum likelihood. \textcite[Theorem~1 and Corollary~1]{Lewis2021} likewise identifies a shock when an autocovariance matrix of the squares and cross-products of the errors has rank at least two and no other row is proportional to the shock's own row, a condition weaker than full rank.\approvedend
\end{remark}

\begin{remark}\label{rem:2}
\approvedstart{24 October 2026}%
If the variance paths belong to a common parametric family, $h_{ki}=G(W_i,\theta_k)$, and $\theta\mapsto G(\cdot,\theta)$ is injective on the support of $W_i$, the lemma also identifies $(\theta_1,\dots,\theta_K)$ up to the same labelling. If the $h_k$ are unknown functions of observed covariates, it identifies the functions themselves (Example~\ref{ex:cov}).\approvedend
\end{remark}

\section{Worked examples}\label{sec:5}

\approvedstart{24 October 2026}%
The examples differ only in the index $i$, the conditioning information $W_i$, and the statistical step that delivers $\Omega_i$. Counting moments and unknowns is a useful first exercise in each case. For a system of $K$ variables, the data supply $K(K+1)/2$ second moments for each distinct $\Omega_i$, the $K$ conditional variances of the errors and their $K(K-1)/2$ conditional covariances. The unknowns are the $K^2$ elements of $B$ and the parameters of the variance paths, less the $K$ scale normalisations of Assumption~\ref{ass:norm}. In terms of the structural equations, the $K^2$ elements of $B$ are in one-to-one correspondence with the $K(K-1)$ structural coefficients, such as the slopes $\alpha$ and $\beta$ in Section~\ref{sec:2}, and the $K$ scales of the shocks. The correspondence holds up to the sign of each row of $A$, provided every diagonal element of $A$ is nonzero. Table~\ref{tab:count} summarises the counts. They check only the order condition. The rank condition is always the lemma's condition that the paths be distinct. The last row shows that the moments must be counted after removing redundancies.\approvedend

\begin{table}[ht]
\centering
\caption{Moments and unknowns in the standard cases.}\label{tab:count}
\small
\begin{tabular}{lccc}
\toprule
Case & Moments & Unknowns & $K=2$\\
\midrule
No heteroskedasticity & $K(K+1)/2$ & $K^2$ & 3 vs 4\\
Two regimes & $K(K+1)$ & $K^2+K$ & 6 vs 6\\
GARCH(1,1), $T$ periods & $TK(K+1)/2$ & $K^2+2r$ & $3T$ vs 8\\
Covariate, $n_z$ values, $h_k\propto e^{\delta_kz}$ & $n_zK(K+1)/2$ & $K^2+K$ & $3n_z$ vs 6\\
Unknown $h_k(z)$, $n_z$ values$^{a}$ & $n_zK(K-1)/2$ & $K(K-1)$ & $n_z$ vs 2\\
Stochastic volatility, $T$ periods$^{b}$ & $TK(K+1)/2$ & $K(K-1)+TK$ & $3T$ vs $2+2T$\\
\midrule
$K=3$, $r=1$, two regimes & 9 independent & 10 & --\\
\bottomrule
\end{tabular}
\par\vspace{1ex}\parbox{\linewidth}{\footnotesize Notes: Section~\ref{sec:5} explains the counts, and Section~\ref{sec:3} the last row. $T$ is the number of periods. In the column for $K=2$, all shocks are heteroskedastic except in the first row. $^{a}$Off-diagonal moments only, with the $K(K-1)$ structural coefficients as unknowns (Example~\ref{ex:cov}). $^{b}$If $\Omega_t$ were observed, with the latent variances as unknowns, excluding the parameters of the latent processes.}
\end{table}

\begin{example}[Two volatility regimes]\label{ex:regimes}
\approvedstart{24 October 2026}%
Let $i=t$, with a known regime $s_t\in\{1,2\}$ occurring with frequencies $\pi$ and $1-\pi$. Shock $k$ has variance $\lambda_{k1}$ in regime~1 and $\lambda_{k2}$ in regime~2, with $\pi\lambda_{k1}+(1-\pi)\lambda_{k2}=1$. The normalisation pins down the level, so two paths coincide if and only if the shocks have the same ratio $\lambda_{k2}/\lambda_{k1}$. Corollary~\ref{cor:full} is therefore the classical condition of distinct variance ratios \parencite{Rigobon:03,LLM2010}, and a class of shocks with a common ratio is identified only up to rotation \parencite{Lewis2025,BBKM2026}. The statistical step is immediate, because $\Omega_s=\E(u_tu_t'\mid s_t=s)$ is a regime covariance matrix. The two matrices give $K(K+1)$ moments for $K^2+K$ unknowns (Table~\ref{tab:count}), so the order condition holds exactly, with no overidentifying moments.\approvedend
\end{example}

\begin{example}[GARCH-type conditional heteroskedasticity]
\approvedstart{24 October 2026}%
Let $i=t$, and let $W_t$ be the information available at the end of period $t-1$, which includes the past errors $u_{t-1},u_{t-2},\dots$. If $r$ shocks follow univariate GARCH processes and $K-r$ are homoskedastic, the variance path of each GARCH shock is the sequence of its conditional variances over time. Two GARCH shocks with identical parameters still have different paths almost surely, provided their variances respond to the lagged squared shocks, because each variance is driven by its own past shocks. The distinctness condition therefore concerns the paths, not the parameters. With $r\ge K-1$ GARCH shocks, Corollary~\ref{cor:full} gives identification, and \textcite{MilunovichYang2013} show that $B$ and the GARCH parameters are then jointly locally identified. With fewer than $K-1$ GARCH shocks, only the columns of the GARCH shocks are identified \parencite{Lanne/Saikk:05}. The statistical step uses the parametric model, in which $\Omega_t$ is a known function of the past and the parameters, estimated by quasi-maximum likelihood. Each period contributes $K(K+1)/2$ moments, while the unknowns do not grow with $T$ (Table~\ref{tab:count}), so the order condition is met many times over. Since each period supplies a single $u_t$ rather than $\Omega_t$ itself, the moments are used through the likelihood.\approvedend
\end{example}

\begin{example}[Stochastic volatility]
\approvedstart{27 October 2026}%
Let $i=t$, and let $W_t$ contain the information available at the end of period $t-1$ and the latent volatilities. Let $h_{kt}\propto\exp(\omega_k x_{kt})$ with latent autoregressive $x_{kt}$. The paths are distinct almost surely when at most one $\omega_k$ is zero and the latent processes are independent. \textcite{LSUW2026} use this specification and allow several $\omega_k$ to be zero, in which case Corollary~\ref{cor:partial} applies. These conditions are sufficient, but not necessary, for distinct paths. The statistical step is harder in this case, because $\Omega_t$ conditional on the latent volatilities is not a moment of observables. It is handled either through the likelihood of the full model or through the autocovariances of $\vech(u_tu_t')$, where $\vech$ stacks the elements of a symmetric matrix on and below the diagonal, as in \textcite{Lewis2021}.\approvedend
\end{example}

\begin{example}[Variances driven by observed covariates]\label{ex:cov}
\approvedstart{27 October 2026}%
Let $i$ index cross-sectional units and $h_{ki}=h_k(z_i)$ for observed covariates $z_i$, which are part of $W_i$ \parencite{Lewbel:06,milunovich2018simultaneous,LMY2020}. The statistical step is the conditional moment $\Omega(z)=\E(uu'\mid z)$, and Lemma~\ref{lem:main} gives identification without any parametric model for the variances. The shocks whose paths differ from all others are identified up to sign and labelling, and so are their variance functions, since $h_k(z)=a_k'\Omega(z)a_k$, where $a_k'$ is the $k$th row of $A$. In terms of the rows of $A$, the $K(K+1)/2$ moments at each value of $z$ split into $K(K-1)/2$ off-diagonal equations, $\E[(a_j'u)(a_k'u)\mid z]=0$ for $j\neq k$, and $K$ diagonal ones, $\E[(a_k'u)^2\mid z]=h_k(z)$. The diagonal equations determine the variance functions and, with Assumption~\ref{ass:norm}, the scales of the rows of $A$. The off-diagonal equations do not involve the variance functions and are unchanged when a row of $A$ is rescaled, so they must determine the $K(K-1)$ ratios within the rows of $A$, which correspond to the structural coefficients. With $K=2$ there is one off-diagonal moment for each value of $z$ and two unknowns, so one value is the unidentified case, two values are the regime case, and a continuous covariate gives infinitely many. For example, let $K=2$ and $h_k(z)=\theta_{k0}+\theta_{k1}z$, positive for every $z$. The two normalised paths coincide if and only if $(\theta_{10},\theta_{11})$ is proportional to $(\theta_{20},\theta_{21})$, so both shocks are identified unless the coefficient vectors are proportional. Figure~\ref{fig:ellipses} is of this form, with coefficient vectors $(1,3)$ and $(1,1)$ in Panel A and $(1,3)$ for both shocks in Panel B. For exponential functions, $h_k(z)\propto e^{\delta_kz}$, the condition is $\delta_1\neq\delta_2$ \parencite{milunovich2018simultaneous}.\approvedend
\end{example}

\section{An eigenvalue construction and a caution on rank tests}\label{sec:6}

\approvedstart{27 October 2026}%
Lemma~\ref{lem:main} states what is identified but not how to compute it. This section shows that the rows of $A$ solve a generalised eigenvalue problem built from a single weighted second moment of the errors, and then asks what a rank test on such moments counts. Constructions of this type appear in signal processing \parencite{Cardoso1989,Tong1991} and in econometrics \parencite{LMY2020}. The construction nests the two-regime solution of \textcite{Rigobon:03}, and the simulation in Section~\ref{sec:8} uses it.\approvedend

\begin{proposition}\label{prop:eigen}
\approvedstart{27 October 2026}%
Let $w(\cdot)$ be a function of the observed conditioning information, write $w_i$ for its value for observation~$i$, and suppose that $\E(|w_i|\,\|u_i\|^2)<\infty$. Let $M_w=\E[w_i\,u_iu_i']$ and $\mu_k=\E[w_ih_{ki}]$. Then
\begin{equation}\label{eq:eigen}
M_w\,a_k=\mu_k\,\Omega\,a_k,\qquad k=1,\dots,K,
\end{equation}
where $a_k'$ is the $k$th row of $A$. If $\mu_k$ differs from every $\mu_j$ with $j\neq k$, then $a_k$ is identified, up to sign and labelling, as the generalised eigenvector of $(M_w,\Omega)$ with eigenvalue $\mu_k$, normalised by $a_k'\Omega a_k=1$.\approvedend
\end{proposition}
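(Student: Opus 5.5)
The plan is to compute $M_w$ by iterated expectations, diagonalise it simultaneously with $\Omega$ through $B$, and read off \eqref{eq:eigen} together with the uniqueness of the eigenvectors.

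\emph{Computing $M_w$.} Because $w_i$ is a function of $W_i$, the law of iterated expectations and Assumption~\ref{ass:cond} give
\[
M_w=\E\bigl[w_i\,\E(u_iu_i'\mid W_i)\bigr]=\E[w_iBH_iB'].
\]
Since $B$ is nonrandom, this equals $B\,\E[w_iH_i]\,B'=B\,\diag(\mu_1,\dots,\mu_K)\,B'$. The moment condition $\E(|w_i|\,\|u_i\|^2)<\infty$ is what makes $M_w$ well defined and justifies the conditioning step. It also makes each $\mu_k$ finite, because
\[
\E(|w_i|h_{ki})=\E\bigl(|w_i|\,(a_k'u_i)^2\bigr)\le\|a_k\|^2\,\E(|w_i|\,\|u_i\|^2)<\infty .
\]

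\emph{The eigen-equation.} Write $\Lambda=\diag(\mu_1,\dots,\mu_K)$. From $A=B^{-1}$ we have $B'A'=I_K$, and column $k$ of $A'$ is $a_k$, so $B'a_k=e_k$. Then $M_wa_k=B\Lambda e_k=\mu_kBe_k=\mu_kb_k$. By \eqref{eq:omega}, $\Omega a_k=BB'a_k=Be_k=b_k$, so $M_wa_k=\mu_k\Omega a_k$, which is \eqref{eq:eigen}. The normalisation also follows, since $a_k'\Omega a_k=e_k'e_k=1$.

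\emph{Uniqueness.} This is the only step needing care. Take any $x\neq0$ with $M_wx=\mu\Omega x$. Put $y=B'x$, which is nonzero because $B$ is nonsingular. The equation becomes $B\Lambda y=\mu By$, and cancelling $B$ gives $\Lambda y=\mu y$, so $(\mu_j-\mu)y_j=0$ for every $j$. Hence the generalised eigenvalues of $(M_w,\Omega)$ are exactly the $\mu_j$. If $\mu=\mu_k$ differs from all $\mu_j$ with $j\neq k$, then $y=ce_k$ and $x=cA'e_k=ca_k$, so the eigenspace is one-dimensional and spanned by $a_k$. The normalisation $x'\Omega x=c^2=1$ leaves only $c=\pm1$.

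\emph{Identification.} $M_w$ and $\Omega$ are moments of observables, so they are identified. Hence the generalised eigenvalues and the one-dimensional eigenspaces are identified as well. The pairing of a simple eigenvalue with an index $k$ is not pinned down by the data, which is exactly the labelling ambiguity already present in Lemma~\ref{lem:main}. I also need to check that this is consistent with observational equivalence. If $\tilde\mu_k$ is computed under an equivalent structure, Lemma~\ref{lem:main}(b) gives $\tilde h_k=h_j$ for the matching shock, so $\tilde\mu_k=\mu_j$ and the set of eigenvalues is unchanged under equivalent structures. This agrees with the algebraic argument above, which already shows that $a_k$ is determined up to sign once its eigenvalue is given.
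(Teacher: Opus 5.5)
Your proof is correct and follows the paper's route. The paper writes $M_w=B\,\E(w_iH_i)\,B'$, uses $B'A'=I$ to get $M_wA'=\Omega A'\,\E(w_iH_i)$ and reads off the $k$th column, which is your $B'a_k=e_k$ computation done for all columns at once; your uniqueness argument via $y=B'x$ and your integrability check are sound details that the paper leaves implicit.
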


\begin{proof}
\approvedstart{27 October 2026}%
By definition, $M_w=B\,\E(w_iH_i)\,B'$ and $\Omega=BB'$. Since $B'A'=I$, $M_wA'=B\,\E(w_iH_i)=\Omega A'\,\E(w_iH_i)$, and the $k$th column is the claim. The normalisation holds because $A\Omega A'=I$.\approvedend
\end{proof}

\approvedstart{27 October 2026}%
With $w_i$ the indicator of regime~2, $M_w/(1-\pi)$ is the regime-2 covariance matrix, and Proposition~\ref{prop:eigen} gives the just-identified two-regime solution of \textcite{Rigobon:03}, equivalent to the decomposition of the two regime covariance matrices used by \textcite{Lanne/lue:06} and \textcite{LLM2010}. Any weight will do, provided it gives the shocks distinct eigenvalues. When the paths are functions of observed information, as with regimes, covariates and GARCH, the weights that fail to separate two distinct paths form a proper linear subspace, so every weight outside a finite union of such subspaces separates all classes. A natural choice can still fail. In Design~A of Section~\ref{sec:8}, $h_2$ is an even function of a symmetric covariate $z$, so the weight $w=z$ gives $\E[z\,(h_2(z)-1)]=0$. Then $\mu_2$ equals the eigenvalue of the homoskedastic shocks, and the second shock cannot be separated from them, whereas $w=z+z^2/2$ separates it. Finally, for a weight with $\E[w_i(h_{ki}-1)]\neq0$ for every heteroskedastic shock $k$, the number of eigenvalues equal to $\E(w_i)$ is the number of homoskedastic shocks, which connects to tests of equal eigenvalues \parencite{LMNS2021}.\approvedend

\approvedstart{29 October 2026}%
Proposition~\ref{prop:eigen} suggests a direct way to count the identified shocks, by testing which eigenvalues are equal. In practice a rank test is often used instead, for a number of reasons. First, the sufficient conditions in much of the literature are rank conditions, such as linear independence of the variance paths \parencite{LMY2020}. \textcite{Lewis2021}, for instance, checks his identification conditions by testing whether an autocovariance matrix of the squares and cross-products of the errors has full rank. Second, rank tests with chi-squared limits apply to any matrix with an asymptotically normal estimator \parencite{CraggDonald1997,KleibergenPaap2006}. Tests of equal eigenvalues are harder to construct, because under the null the eigenvectors are not unique, and are available for two regimes with a known break date \parencite{LMNS2021}. Third, a single weight can fail to separate shocks with distinct paths, as $w=z$ does in Design~A, whereas a rank test uses many functions of the covariates at once. It is therefore tempting to count the heteroskedastic shocks by the rank of the covariance between the squares and cross-products of the errors and functions of the covariates. Proposition~\ref{prop:rank} shows that this rank measures something else.\approvedend

\begin{proposition}\label{prop:rank}
\approvedstart{27 October 2026}%
Let the variances be functions of covariates, as in Example~\ref{ex:cov}, so that $h_{ki}=h_k(z_i)$. Write $h(z)=(h_1(z),\dots,h_K(z))'$, suppose that $\E[h_k(z_i)^2]<\infty$ for every $k$, and let $\psi(z)$ be a vector of square-integrable functions. Then
\begin{equation}\label{eq:cpsi}
\begin{aligned}
C_\psi&:=\Cov\bigl(\vech(u_iu_i'),\psi(z_i)\bigr)=V\Cov\bigl(h(z_i),\psi(z_i)\bigr),\\
V&=\bigl[\vech(b_1b_1'),\dots,\vech(b_Kb_K')\bigr],
\end{aligned}
\end{equation}
where $V$ has full column rank $K$. Hence $\rank C_\psi\le r^*:=\rank \Var(h(z_i))$, with equality when $\psi$ contains $h$. If shocks $1,\dots,r$ are heteroskedastic and the rest homoskedastic, then $r^*\le r$, with equality if and only if $1,h_1,\dots,h_r$ are linearly independent.\approvedend
\end{proposition}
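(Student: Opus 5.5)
The proof proceeds in three steps: compute $C_\psi$ by conditioning on $z_i$, show that $V$ has full column rank, and then read off the rank statements from the factorisation.

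\emph{The covariance formula.} By Assumption~\ref{ass:cond} and \eqref{eq:omega}, $\E(u_iu_i'\mid z_i)=BH_iB'=\sum_k h_k(z_i)\,b_kb_k'$. Applying $\vech$ gives
\[
\E\bigl(\vech(u_iu_i')\mid z_i\bigr)=V h(z_i).
\]
By the law of iterated expectations,
\[
\Cov\bigl(\vech(u_iu_i'),\psi(z_i)\bigr)=\Cov\bigl(\E(\vech(u_iu_i')\mid z_i),\psi(z_i)\bigr)=V\Cov\bigl(h(z_i),\psi(z_i)\bigr).
\]
The moment conditions are what make this step legitimate. $\E[h_k^2]<\infty$ together with square integrability of $\psi$ ensures, by Cauchy--Schwarz, that the covariance of $h$ with $\psi$ exists. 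The only care needed is that $\vech(u_iu_i')\psi(z_i)$ is integrable. I would handle this by conditioning first: $\E[\vech(u_iu_i')\psi(z_i)]=\E[Vh(z_i)\psi(z_i)]$, so only the product $h\psi$ needs to be integrable.

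\emph{Full column rank of $V$.} Suppose $\sum_k c_k\vech(b_kb_k')=0$. Since $\vech$ is injective on symmetric matrices, this gives $\sum_k c_kb_kb_k'=B\diag(c)B'=0$. Because $B$ is nonsingular, $\diag(c)=0$, so every $c_k=0$.

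\emph{Rank consequences.} The bound $\rank C_\psi\le\rank\Cov(h,\psi)$ is immediate from the factorisation, and since $V$ has full column rank the two ranks are in fact equal. To compare with $r^*$, write $\tilde h=h-\E h$ and $\tilde\psi=\psi-\E\psi$, and let $S\subseteq L^2$ be the span of the components of $\tilde h$. The rank of $\Cov(h,\psi)=\E[\tilde h\tilde\psi']$ is at most the dimension of $S$. Moreover, $\rank\Var(h)=\dim S$, because $c'\Var(h)c=\E[(c'\tilde h)^2]$. This gives $\rank C_\psi\le r^*$.

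When $\psi$ contains $h$, the corresponding columns of $\Cov(h,\psi)$ are exactly $\Var(h)$, so the rank is at least $r^*$, and equality follows.

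For the last claim, the homoskedastic components of $\tilde h$ vanish, so $S$ is spanned by $\tilde h_1,\dots,\tilde h_r$ and $r^*\le r$. Equality holds if and only if $\tilde h_1,\dots,\tilde h_r$ are linearly independent in $L^2$, that is, if and only if no nontrivial combination $\sum c_kh_k$ is almost surely constant. That is the same as linear independence of $1,h_1,\dots,h_r$, because the $h_k$ for $k\le r$ are nonconstant. The converse direction needs this last fact: a relation $c_0+\sum c_kh_k=0$ with all $c_k=0$ for $k\ge1$ forces $c_0=0$, and otherwise it gives a dependence among the centred $\tilde h_k$.

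The main obstacle is only bookkeeping. One must identify $\rank\Var(h)$ with the dimension of the span of the centred paths, and treat the integrability of $\vech(u_iu_i')\psi(z_i)$ through conditioning. The rest is linear algebra.
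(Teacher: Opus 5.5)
Your proof is correct and follows the paper's argument step for step: the conditional mean $\E[\vech(u_iu_i')\mid z_i]=Vh(z_i)$ gives the factorisation, $B\diag(c)B'=0$ forces $c=0$, and the rank statements are read off from the span of the centred paths, to which the homoskedastic shocks contribute nothing. You also fill in details the paper leaves implicit, namely $\rank C_\psi=\rank\Cov(h,\psi)$, $\rank\Var(h)=\dim S$ and integrability; for the last, applying the tower property directly to the product assumes the integrability you want to prove, so apply it instead to the nonnegative variables $u_{ij}^2\lvert\psi_l(z_i)\rvert$ and bound $\lvert u_{ij}u_{ik}\rvert\le(u_{ij}^2+u_{ik}^2)/2$.
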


\approvedstart{29 October 2026}%
In words, $r^*$ is the number of linearly independent centred variance paths, that is, the number of independent directions in which $\Omega(z)$ moves as $z$ changes.\approvedend

\begin{proof}
\approvedstart{27 October 2026}%
The conditional mean is
\[
\E[\vech(u_iu_i')\mid z_i]=\vech(BH(z_i)B')=\textstyle\sum_kh_k(z_i)\vech(b_kb_k'),\]
which gives the factorisation. If $\sum_kc_k\,b_kb_k'=B\,\diag(c)\,B'=0$, then $c=0$ because $B$ is nonsingular, so $V$ has full column rank. The rank of $\Cov(h,\psi)$ is at most the dimension of the span of the centred $h_k$, which is $r^*$, and equals it when $\psi$ contains $h$. The centred paths of homoskedastic shocks are zero, so $r^*\le r$, and the centred $h_1,\dots,h_r$ are linearly independent exactly when $1,h_1,\dots,h_r$ are.\approvedend
\end{proof}

\approvedstart{29 October 2026}%
With two regimes, and treating the indicator $d$ of regime~2 as the covariate, $h_k=1+\rho_k(d-\E d)$ for constants $\rho_k$, which are zero for homoskedastic shocks. Every path is therefore affine in the single indicator $d$, so $\Omega_s$ moves along a single direction and $r^*\le1$ whatever the number of heteroskedastic shocks. A rank test on a matrix such as $C_\psi$ estimates $\rank C_\psi$, which by Proposition~\ref{prop:rank} is at most $r^*$. Read as a count of identified shocks, the estimated rank can therefore be too low, as in Design~B of Section~\ref{sec:8}, or too high, as in Design~C, where the rank is one but no shock is identified individually. A full-rank check, as in \textcite{Lewis2021}, does not have the second problem, because in his setting full rank implies identification. As he notes, the check can only be conservative, the point that Remark~\ref{rem:1} highlights. \textcite{Lewis2025} identifies factor structures in volatilities as a gap for existing tests of the number of heteroskedastic shocks.\approvedend

\approvedstart{29 October 2026}%
The number of identified shocks and the number $r$ of heteroskedastic shocks answer different questions. For a weight that separates the classes, a shock is identified when its eigenvalue in Proposition~\ref{prop:eigen} is distinct, so the number of identified shocks is a question about repeated eigenvalues. The number $r$ is instead tested sequentially, by asking whether any heteroskedasticity remains among the shocks not yet extracted. These tests assume nondegenerate GARCH components \parencite{Lanne/Saikk:05,Luetkepohl/Milunovich:14} or linearly independent paths \parencite{LMY2020}, under which every heteroskedastic shock is identified. Outside these assumptions fewer than $r$ shocks can be identified, as in Design~C of Section~\ref{sec:8}, where two shocks are heteroskedastic but share a path.\approvedend

\section{A simulation exercise}\label{sec:8}

\approvedstart{27 October 2026}%
To illustrate Lemma~\ref{lem:main} and Propositions~\ref{prop:eigen} and~\ref{prop:rank} in finite samples, we simulate three designs with $K=4$ from
\begin{equation}\label{eq:simmodel}
u_i=B\varepsilon_i,\qquad \varepsilon_i\mid z_i\sim N\bigl(0,\diag(h_1(z_i),h_2(z_i),1,1)\bigr),\qquad i=1,\dots,n,
\end{equation}
independently over $i$, where
\[
B=\begin{pmatrix}0.8&0.5&0.3&0.2\\0.4&1.2&0.5&0.1\\0.2&0.3&1.5&0.4\\0.3&0.2&0.4&0.9\end{pmatrix}
\]
and $z_i\sim N(0,1)$ is a scalar covariate. Shocks 1 and 2 are heteroskedastic, and shocks 3 and 4 are homoskedastic. In the two-regime designs, $d_i=1\{z_i>0\}$, which equals one when $z_i>0$ and zero otherwise, indicates regime~2. The designs differ only in $h_1$ and $h_2$.
\begin{itemize}
\item[A.] Covariate-driven variances, $h_1\propto e^{0.8z}$ and $h_2\propto0.2+z^2$, so $r=r^*=2$. The two paths are distinct, so both heteroskedastic shocks are identified individually.
\item[B.] Two regimes with distinct ratios, $h_1\propto1+2d$ and $h_2\propto1+0.4d$, so $r=2$ but $r^*=1$. The two paths are still distinct, so both heteroskedastic shocks are identified individually.
\item[C.] Two regimes with equal ratios, $h_1=h_2\propto1+2d$, so $r=2$ and $r^*=1$. The two heteroskedastic shocks share a path and form one class, so they are not identified individually.\approvedend
\end{itemize}

\approvedstart{30 October 2026}%
The rank $r^*$ therefore equals the number of identified shocks in Design~A, is smaller in Design~B, and is larger in Design~C.\approvedend

\approvedstart{29 October 2026}%
Each $h_k$ is scaled to have mean one. The generalised eigenvector estimator is the sample version of Proposition~\ref{prop:eigen} with $w=z+z^2/2$. It uses the two generalised eigenvectors of $(\hat M_w,\hat\Omega)$ whose eigenvalues are furthest from the sample mean of $w$. Table~\ref{tab:sim} reports four statistics over 1,000 replications for $n=250$, $1{,}000$ and $4{,}000$. The row angle between an estimated row $\hat a$ and a true row $a$ is
\begin{equation}\label{eq:rowangle}
\varphi=\arccos\frac{\lvert\hat a'a\rvert}{\lVert\hat a\rVert\,\lVert a\rVert},
\end{equation}
the angle whose cosine is the absolute uncentred correlation of the two rows, so it does not depend on the sign or scale of $\hat a$. The two estimated rows are matched to the two true rows of $A_1$, and the table reports the median over replications of the average of the two angles. The subspace angle compares the plane spanned by the two estimated rows with the plane spanned by the two true rows, so it ignores how the rows are rotated within that plane. It is the largest of the angles between the two planes, whose cosines are the singular values of $\hat U'U$ for orthonormal bases $\hat U$ and $U$ of the planes. When the two shocks form one class, as in Design~C, only the plane is identified, so the subspace angle can shrink while the row angle cannot. The fifth column concerns the rank of $C_\psi$, with $\psi=(z,z^2,z^3,d)$, which is the number of its nonzero singular values $s_1\ge s_2\ge s_3\ge s_4$, estimated by those of the sample matrix. Since $\rank C_\psi\le1$ exactly when $s_2=0$, the column reports the mean over replications of $\hat s_2/\hat s_1$, which tends to zero when $\rank C_\psi=1$ and to a positive number when it is 2. Its level depends on how $\psi$ is scaled, so only its behaviour as $n$ grows matters. The last column reports how often the rank test of \textcite{KleibergenPaap2006}, built from the singular values of a standardised $C_\psi$, rejects $\rank C_\psi\le1$ at the 5 percent level.\approvedend

\begin{table}[ht]
\centering
\caption{Estimation error and rank of $C_\psi$ in the simulation exercise.}\label{tab:sim}
\small
\setlength{\tabcolsep}{3pt}
\begin{tabular}{llcccc}
\toprule
Design & $n$ & Row angle & Subspace angle & $\hat s_2/\hat s_1$ & KP \\
\midrule
A (covariate) & 250 & 18.5 & 13.0 & 0.365 & 0.79 \\
 & 1,000 & 12.3 & 6.3 & 0.406 & 1.00 \\
 & 4,000 & 7.1 & 3.1 & 0.445 & 1.00 \\
\midrule
B (two regimes, distinct ratios) & 250 & 32.2 & 51.3 & 0.315 & 0.09 \\
 & 1,000 & 15.6 & 25.2 & 0.250 & 0.08 \\
 & 4,000 & 6.7 & 11.1 & 0.141 & 0.10 \\
\midrule
C (two regimes, equal ratios) & 250 & 24.6 & 21.4 & 0.259 & 0.08 \\
 & 1,000 & 19.3 & 9.5 & 0.147 & 0.07 \\
 & 4,000 & 18.2 & 4.7 & 0.075 & 0.07 \\
\bottomrule
\end{tabular}

\medskip
\parbox{\linewidth}{\footnotesize Notes: Angles are medians over the 1,000 replications, in degrees, and $\hat s_2/\hat s_1$ is a mean. KP is the rejection frequency of the \textcite{KleibergenPaap2006} test of $\rank C_\psi\le1$ at the 5 percent level. The replication code is part of the supplementary material.}
\end{table}

\approvedstart{29 October 2026}%
In Design~A both rows are estimated consistently, $\hat s_2/\hat s_1$ settles at a positive value, as $\rank C_\psi=r^*=2$ for this choice of $\psi$, and the Kleibergen--Paap test rejects $\rank C_\psi\le1$ in every replication from $n=1{,}000$. The subspace angle is much smaller than the row angle, because the two heteroskedastic eigenvalues, about $1.62$ and $1.33$, are close to each other relative to the noise in the weighted moments, so most of the error is a rotation within the pair. In Design~B both rows are again estimated consistently, although $\hat s_2/\hat s_1$ tends to zero and the test rejects $\rank C_\psi\le1$ in only about 9 percent of replications. The rank of $C_\psi$ is therefore one although two shocks are identified. The subspace angle is three to four times larger than in Design~A, because the eigenvalues of the two heteroskedastic shocks, about $0.90$ and $0.63$, are close to that of the homoskedastic ones, $0.50$. This is a form of the weak-identification problem studied by \textcite{Lewis2022}, and it affects the second row most. At $n=4{,}000$ the median angle is 3.0 degrees for the first row and 9.9 degrees for the second, whose eigenvalue is closer to $0.50$. In Design~C the individual rows are not estimated consistently, and their angle does not shrink towards zero, but the subspace they span is estimated consistently, as Lemma~\ref{lem:main}(c) predicts. The rank test gives the same answer as in Design~B, although here no shock is identified individually, so it cannot tell the two designs apart. The test is somewhat oversized at these sample sizes.\approvedend

\approvedstart{29 October 2026}%
The supplementary material contains the replication code and four exercises based on these designs, such as rerunning Design~A with $w=z$ to see the failure predicted in Section~\ref{sec:6}.\approvedend

\section{Concluding remarks}\label{sec:9}

\approvedstart{29 October 2026}%
In this note, we show that one lemma, proved with elementary linear algebra, organises the main results on identification through heteroskedasticity. Distinct variance paths are necessary and sufficient for identifying a shock, and the standard results on full and partial identification follow as special cases.\approvedend

\approvedstart{29 October 2026}%
Our key teaching point is to separate the statistical step, which differs across models, from the algebraic step, the identification problem in the classical sense, which is always the same. Counting moments and unknowns gives the order condition in each model, and the lemma gives the rank condition. Proposition~\ref{prop:eigen} links the lemma to the two-regime solution of \textcite{Rigobon:03}, and Proposition~\ref{prop:rank} explains why a rank-based count of linearly independent centred variance paths and a count of identified shocks can differ. The simulation illustrates both points. The generalised eigenvector estimator recovers both heteroskedastic shocks in a two-regime design in which the deviations of their variance paths from the mean are proportional, so that a rank test would count only one. When the two shocks share a variance path, it recovers only the space they span.\approvedend

\approvedstart{29 October 2026}%
Finally, we note some limitations and directions for further teaching material. The lemma concerns the algebraic step only. Estimation and inference require separate arguments. When the variance paths are distinct but nearly proportional, the shocks are only weakly identified, and inference that is robust to weak identification \parencite{Lewis2022} is needed. Using several weighted moments at once, by joint diagonalisation as in \textcite{PhamCardoso2001}, is a natural extension of Proposition~\ref{prop:eigen}.\approvedend

\printbibliography

\end{document}